\newcommand{\blind}{0}
\newtheorem{prop}{Proposition}[section]
\newtheorem{theorem}{Theorem}[section]
\newtheorem{corollary}{Corollary}[theorem] % Use theorem counter as `parent`
\newtheorem*{remark}{Remark}
\begin{document}

\newcommand{\indep}{\rotatebox[origin=c]{90}{$\models$}}
\newcommand{\bb}{\boldsymbol{\beta}}
\newcommand{\bmu}{\boldsymbol{\mu}}
\newcommand{\by}{\boldsymbol{y}}
\newcommand{\bx}{\boldsymbol{x}}
\newcommand{\bu}{\boldsymbol{u}}
\newcommand{\ba}{\boldsymbol{\alpha}}
\newcommand{\bg}{\boldsymbol{\gamma}}
\newcommand{\bz}{\boldsymbol{z}}
\newcommand{\bZ}{\boldsymbol{Z}}
\newcommand{\bh}{\boldsymbol{h}}
\newcommand{\bd}{\boldsymbol{\delta}}
\newcommand{\bD}{\boldsymbol{D}}
\newcommand{\bt}{\boldsymbol{\theta}}

\def\spacingset#1{\renewcommand{\baselinestretch}%
{#1}\small\normalsize} \spacingset{1}

%%%%%%%%%%%%%%%%%%%%%%%%%%%%%%%%%%%%%%%%%%%%%%%%%%%%%%%%%%%%%%%%%%%%%%%%%%%%%%
\if0\blind
{
  \title{\bf Inference with approximate local false discovery rates}
  \author{Rajesh Karmakar\thanks{Email: \href{mailto:rajeshkarmakar589@gmail.com}{\nolinkurl{rajeshkarmakar589@gmail.com}}}\hspace{.2cm}\\Department of Statistics and Operations Research, Tel-Aviv university\\
    \\
    Ruth Heller\thanks{Email: \href{mailto:ruheller@gmail.com}{\nolinkurl{ruheller@gmail.com}}}\hspace{.2cm}\\Department of Statistics and Operations Research, Tel-Aviv university\\
    \\
    Saharon Rosset\thanks{Email: \href{mailto:saharon@tauex.tau.ac.il}{\nolinkurl{saharon@tauex.tau.ac.il}}}\hspace{.2cm}\\Department of Statistics and Operations Research, Tel-Aviv university}
  \maketitle
} \fi

%\if0\blind
%{
%  \bigskip
%  \bigskip
%  \bigskip
%  \begin{center}
%    {\LARGE\bf Inference with approximate local false discovery rates}
%\end{center}
%  \medskip
%} \fi

\bigskip
\begin{abstract}
Efron's two-group model is widely used in large scale multiple testing. This model assumes that test statistics are mutually independent, however  
in realistic settings they  are typically dependent, 
and taking the dependence into account can boost power. The general two-group model takes the dependence between the test statistics into account.  Optimal  policies in the general two-group model require calculation, for each hypothesis, of the probability that it is a true null given all test statistics, denoted  local false discovery rate (locFDR). Unfortunately, calculating  locFDRs under realistic dependence structures can be computationally prohibitive.  
We propose calculating approximate locFDRs based on a properly defined N-neighborhood for each hypothesis. We prove that by thresholding the approximate locFDRs with a fixed threshold, the marginal false discovery rate is controlled for any dependence structure. Furthermore, we prove that this is the optimal procedure in a restricted class of decision rules, where decision for each hypothesis is only guided by its N-neighborhood.  We show through extensive simulations that our proposed method achieves substantial power gains compared to alternative practical approaches, while maintaining conceptual simplicity and computational feasibility. We demonstrate the utility of our method on  a genome wide association study of height. 
\end{abstract}

\noindent%
{\it Keywords:}  dependent test statistics; large scale inference;  marginal false discovery rate (mFDR); multiple testing;  genome wide association studies (GWAS)
\vfill

\newpage
\spacingset{1.45} % DON'T change the spacing!
\section{Introduction}
\label{sec:intro}

%Body of paper.  Margins in this document are roughly 0.75 inches all
%around, letter size paper.

The two-group model introduced in \cite{efron2001empirical}, is a popular mixture model-based procedure for large scale inference. Within this framework it is common to control false discovery by controlling the False Discovery Rate (FDR, \citealt{benjamini1995controlling}), or its closely related variant the marginal FDR (mFDR,  \citealt{genovese2002operating}). 

\cite{sun2007oracle} developed an adaptive multiple testing rule for mFDR control in the two-group model and showed that such a procedure is optimal in the sense that it minimizes the false non-discovery rate while controlling the mFDR under independence of tests. This rule is based on thresholding the marginal local FDR (locFDR), which is the conditional probability of the hypothesis being null, given only the observed test statistic for that hypothesis. If we denote the true state for test $i$ by $h_i \in \{0,1\}$ and the observed test statistics by $Z_i$, then marginal $locFDR = {\mathbb P}(h_i=0 | Z_i)$. In a series of papers \citep{efron2007correlation,efron2010correlated}, Efron has investigated the effect of correlations on the outcome of multiple testing procedures, % and it became evident that correlations among the z-scores can not be ignored --- in particular, 
being concerned that approaches that ignore dependence may not correctly control the FDR, so it is necessary to accommodate these correlations in multiple testing methodology. 

Putting power at the forefront, \cite{xie2011optimal} introduced the general two-group model with arbitrary dependence structure among z-scores. They showed that the optimal policy for mFDR control is thresholding the ``oracle'' locFDR statistics,  $\mathbb{P}(h_i = 0 \vert Z_1,\ldots,Z_K) = \mathbb{P}(h_i = 0 \vert \bZ)$,  with a fixed threshold. The oracle locFDR statistics explicitly use the correlation among the z-scores. 
Recently \cite{heller2021optimal} showed that the oracle locFDRs are the optimal test statistics for controlling FDR as well, under general dependence. They showed that for some special dependence structures like block dependence or equicorrelated setting, it may be practical to calculate these oracle locFDRs for dependent data. However beyond these special cases there is no computationally feasible method for calculating the full locFDR under dependence. A standard approach is to approximate the oracle locFDR statistics by sub optimal marginal locFDR statistics which are much easier to compute. By ignoring the correlation among tests statistics, this approach leads to substantial loss of power, as demonstrated empirically in our analyses below.

As a real life example, in GWAS we are interested in finding significant SNPs associated with a phenotype among millions of SNPs in the whole genome. 
It is well understood that nearby SNPs on the genome are highly correlated due to linkage disequilibrium (LD), while SNPs which are far apart are very weakly correlated. Marginal analysis is a standard procedure for finding associations in GWAS, but it does not identify the SNPs that are associated with the phenotype given all other SNPs. For a joint analysis, one approach is to approximate the oracle locFDRs using 
a complete Bayesian treatment, which relies on methods like Markov Chain Monte Carlo (MCMC) for inference. For example, \cite{zhu2017bayesian} developed a Bayesian approach for finding association using GWAS summary statistics which takes into account the correlations among the SNPs. Their approach relies on publicly available estimates of effect sizes for $K$ SNPs $\hat{\beta}_j$ for $1\le j \le K$, calculated from marginal regressions without taking correlations into account, and builds a complete model taking into account the correlations between these estimates, using a known correlation (LD) matrix. They then implement a full Bayesian approach using MCMC to ultimately infer the posterior null probabilities $\mathbb{P}(\beta_j = 0 \vert \hat{\bb})$ where $\hat{\bb} = (\hat{\beta_1},\ldots,\hat{\beta_K})$. In practice, their procedure is very time consuming, as each MCMC iteration takes substantial computation, and the entire procedure has to be repeated several times to confirm convergence of the MCMC solution.

In this paper we propose a novel approach for mFDR control under dependence in the two-group model, which can be viewed as a compromise between the simplistic approach of relying on marginal locFDRs and the computationally prohibitive approach of calculating the full oracle locFDRs and obtaining optimal power. We propose to define the {\em N-neighborhood} for each test, for some small $N\in \mathbb N,$ and only consider multiple testing procedures that  base the decision for test $i$ only on its $N$ neighborhood. We enote this class of procedures (or decision policies) by $C_N.$ For example, in our motivating application to GWAS, LD structure implies physically close SNPs are correlated, and the range of correlation is roughly similar throughout the genome,  so the neighborhood for each SNP is defined by the group of $2\times N$ SNPs surrounding it.

The rest of the paper is organized as follows.

 %   \item 
    In \S~\ref{sec:NandG} we  define all the necessary notations and explain our goal in this paper.
%    \item 
In \S~\ref{sec:meth} we prove in Theorem \ref{theorem:theorem1} that within the reduced class $C_N$, the optimal test that maximizes power while controlling mFDR in the two-group model, relies on thresholding the posterior probabilities $T_{i,N}=\mathbb{P}(h_i=0\vert Z_{i-N},\dots,Z_i,\dots,Z_{i+N}),$ which we term $locFDR_N.$We further show in Corollary \ref{cor1}  that the power (expected number of true discoveries) is an increasing function of $N$ for the resulting rules, making explicit the  trade off between power and computation that the neighborhood approximation provides.
    In  \S~\ref{sec:CompEst} we show that the computation of the $locFDR_N$ statistics is linear in the number of hypothesis. Furthermore we show how we can estimate the $locFDR_N$ statistics assuming a reasonable model for GWAS. Finally we summarize our complete data driven algorithm. 
    In  \S~\ref{sec:verify} we perform extensive simulations with different covariance structures and we observe that the $locFDR_N$ statistics achieves more power compared to other 
    practical methods. In \S~\ref{sec:UKB} we provide a complete analysis pipeline for GWAS using summary statistics. Required input data are marginal SNP effect sizes ($\hat\beta_j$), standard error of the effect (s.e.($\hat\beta_j$) $\approx \hat \sigma_j$) and estimated matrix $\hat R$ of LD (correlations) among Z-scores ($Z_j = {\hat\beta_j}/{\hat \sigma_j}$) of SNPs.
\section{Notation and Goal}\label{sec:NandG}
We assume the data are generated from the generalized two group model. The hypothesis states vector, $\bh=(h_1,\dots,h_K)$, has entries sampled independently from the $Ber(\pi)$ distribution. Given the hypothesis states $\bh$, the sequence of test statistic for testing the $K$ null hypothesis, $h_i = 0$, $i=1,\ldots, K$,  follows $\bZ = (Z_1,\dots,Z_K)  \sim g(\bz \vert \bh)$. Based on $\bZ$, the goal is to construct a multiple testing procedure, defined by the decision rule $\bD(\bZ) = (D_1(\bZ),\dots,D_K(\bZ)):\mathbb{R}^K \to \{0,1\}^K$,  which has good power while controlling a meaningful error.

Let $T_i(\bZ)$ denote the oracle locFDR, which is the statistic for optimal decisions \citep{xie2011optimal, heller2021optimal}: 
$$\quad T_i(\bz) = \mathbb{P}(h_i =0 \vert \bZ = \bz) = \dfrac{\sum_{\bh, h_i =0}{\mathbb{P}(\bz \vert \bh) \mathbb{P}(\bh)}}{\sum_{\bh}{\mathbb{P}(\bz \vert \bh) \mathbb{P}(\bh)}}.$$

The expected number of the true  and false positives are simple expressions of the oracle locFDRs: \begin{eqnarray} 
&& TP(\bD) =\mathbb{E}_{\bZ,\bh}(\bh^T \bD) = \mathbb{E}_{\bZ}\left(\sum_{i=1}^{K}{D_i(\bZ)(1-T_i(\bZ))}\right)\nonumber \\
&& \mathbb{E}(V(\bD)) =\mathbb{E}_{\bZ,\bh}((1-\bh)^T \bD) = \mathbb{E}_{\bZ}\left(\sum_{i=1}^{K}{D_i(\bZ)T_i(\bZ)}\right) \nonumber
\end{eqnarray}

A relevant error rate is the marginal false discovery rate (mFDR), $$mFDR(\bD) = \dfrac{\mathbb{E}_{\bZ,\bh}(V(\bD))}{\mathbb{E}_{\bZ,\bh}(R(\bD))},$$
where $\quad \mathbb{E}_{\bZ,\bh}(R(\bD)) = \mathbb{E}_{\bZ,\bh}(V(\bD)) + TP(\bD)$ is the expected number of rejections.

Note that some expectations above are over the joint distribution of $(\bZ,\bh)$ and some are over $\bZ$. From now on we will not denote the random variable with respect to which we take the expectation since it is easily understandable from the context.
It has been shown in \cite{xie2011optimal} that the optimal rule over the class of all decision functions $\bD:\mathbb{R}^K \to \{0,1\}^K$ for the above problem is thresholding the locFDR statistic with a fixed threshold. So the decision rule is 
%\begin{equation}
    $D_i(\bz) = \mathbb{I}(T_i(\bz)\le c)$, where $c$ is a  fixed constant depending on $\alpha$.
%\end{equation}
The complexity required for naively calculating all the locFDR statistics is $O(K2^K)$, which is typically infeasible. Hence the optimal procedure is impractical, except for some very special types of dependencies across the $z$-scores, e.g.,  block dependence with small block sizes \citep{heller2021optimal}.
Due to the difficulty in finding the globally optimal solution, we suggest considering  a smaller class of decision functions  which depend only on $N$ neighbors, for $N \in \{0,\ldots, K\}$:
\begin{gather}
    \mathcal{C}_N = \left\{\bD:\mathbb{R}^K\to \{0,1\}^K \quad \vert \quad \text{$\bD$ is of the form} \quad \bD(\bz) = (D_1(\bz_{1,N}),\dots,D_K(\bz_{K,N})) \right\},\\
    \text{where} \quad \bz_{i,N} = (z_{\max(i-N,1)}, \dots, z_{i-1},z_i,z_{i+1}, \dots , z_{\min(i+N,K)}), \quad 1\le i\le K.\nonumber
\end{gather}
In this definition we assume that the tests are ordered and each test's "neighbors" are defined by the ones next to it in the order. We also assume that the neighborhood size $N$ is fixed for all tests. Both of these assumptions can be relaxed, but we make them here for simplicity of notation and presentation of the resulting algorithms. 

Given a choice of $N$ and the resulting class $C_N,$ we  show in \S~\ref{sec:meth} that the optimal testing procedure,  which is limited to rules in $C_N$, relies  on the localFDR after marginalization so that the test statistic is within this class: 
\begin{equation}
    T_{i,N} = \mathbb{P}(h_i=0 \vert \bZ_{i,N}).
\end{equation}
Henceforth, we refer to $T_{i,N}$ as the $locFDR_N$. For $N=0$, $C_0$ includes only marginal rules and hence $locFDR_0$ is the marginal locFDR \citep{efron2001empirical}, while for $N=K$, $locFDR_K$ \citep{sun2009large,xie2011optimal,heller2021optimal} is the true locFDR. We can view increasing the size of $N$ as offering a trade-off between computation and power: for small $N$, calculating $locFDR_N$ and the resulting optimal rules within the limited class $C_N$ is computationally efficient but can suffer loss of power, and as $N$ increases computations become exponentially more complex but power improves. 
As we demonstrate below, in GWAS-like situations with a natural neighborhood structure and  strong dependence only within neighborhood, the power increase can be rapid when $N$ is small, and we can obtain excellent power-computation tradeoff (Figure~\ref{fig:sub1}). For other types of correlation structures we also get a nice trade-off (Figure~\ref{fig:sub2}, \ref{fig:sub3}) . Here our aim is not to find the optimal solution (which is not implementable), but a relaxation of the optimal solution which improves power significantly compared to existing marginal test statistics. We show in \S~\ref{sec:verify} that the power gain using the $locFDR_N$ statistic, compared to existing methods, can be substantial. 

\section{OMT Policy in the restricted class $\mathcal{C}_N$}
\label{sec:meth}

In the following proposition we state a simple but useful result.
\begin{prop}
\label{prop:prop1}
Define the quantity

\begin{equation} \label{main_rule}
    t_{\alpha, N} = \sup \left\{t:\dfrac{\mathbb{E}\sum_{i = 1}^{K}{\mathbb{I}(T_{i,N}\le t)T_{i,N}}}{\mathbb{E}\sum_{i = 1}^{K}{\mathbb{I}(T_{i,N}\le t)}}\le \alpha\right\}
\end{equation}
for $N \ge 0$ and $0 < \alpha < 1$. Then the procedure which rejects $H_0^i$ if $T_{i,N} \le t_{\alpha, N}$ for $i = 1,\dots, K$, controls the mFDR at level $\alpha$.
\end{prop}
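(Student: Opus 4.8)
The plan is to evaluate the $\mfdr$ of the proposed threshold rule $D_i(\bZ)=\mathbb{I}(T_{i,N}\le t_{\alpha,N})$ directly and show that it coincides exactly with the ratio appearing in the definition \eqref{main_rule} of $t_{\alpha,N}$, after which control at level $\alpha$ follows from the definition of the supremum. Using the expressions for $\mathbb{E}(V(\bD))$ and $\mathbb{E}(R(\bD))$ recorded in \S~\ref{sec:NandG}, the denominator is immediate: $\mathbb{E}(R(\bD))=\mathbb{E}\sum_{i=1}^{K}\mathbb{I}(T_{i,N}\le t_{\alpha,N})$, which already matches the denominator in \eqref{main_rule}. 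The numerator, however, is $\mathbb{E}(V(\bD))=\mathbb{E}\sum_{i=1}^{K}\mathbb{I}(T_{i,N}\le t_{\alpha,N})\,T_i(\bZ)$ and involves the \emph{oracle} locFDR $T_i(\bZ)=\mathbb{P}(h_i=0\mid\bZ)$, whereas the definition \eqref{main_rule} uses the $locFDR_N$ statistic $T_{i,N}$. Reconciling these two is the heart of the argument.

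The key step is to prove, for each $i$ and each threshold $t$, the identity
\begin{equation*}
\mathbb{E}\bigl[\mathbb{I}(T_{i,N}\le t)\,T_i(\bZ)\bigr]\;=\;\mathbb{E}\bigl[\mathbb{I}(T_{i,N}\le t)\,T_{i,N}\bigr].
\end{equation*}
I would establish this by showing both sides equal $\mathbb{P}(h_i=0,\,T_{i,N}\le t)$ via the tower property. The crucial observation is that $\mathbb{I}(T_{i,N}\le t)$ is a function of $\bZ_{i,N}$ alone, hence measurable with respect to both $\sigma(\bZ_{i,N})$ and the larger $\sigma(\bZ)$. Writing $T_i(\bZ)=\mathbb{E}[\mathbb{I}(h_i=0)\mid\bZ]$ and pulling the $\sigma(\bZ)$-measurable indicator inside the conditional expectation gives the left-hand side as $\mathbb{E}[\mathbb{I}(T_{i,N}\le t)\mathbb{I}(h_i=0)]$; writing $T_{i,N}=\mathbb{E}[\mathbb{I}(h_i=0)\mid\bZ_{i,N}]$ and using $\sigma(\bZ_{i,N})$-measurability of the same indicator gives the right-hand side as the identical quantity. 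Summing over $i$ then shows $\mathbb{E}(V(\bD))=\mathbb{E}\sum_{i=1}^{K}\mathbb{I}(T_{i,N}\le t_{\alpha,N})\,T_{i,N}$, so the achieved $\mfdr$ is precisely the ratio in \eqref{main_rule} evaluated at $t=t_{\alpha,N}$.

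It remains to conclude that this ratio is at most $\alpha$. Writing $R(t)$ for the ratio in \eqref{main_rule}, one checks that it is nondecreasing in $t$ — it is the conditional mean of the pooled $T_{i,N}$ statistics given that they fall at or below $t$ — and that, under continuity of the law of the $T_{i,N}$ (inherited from the continuity of the z-scores), $R$ is continuous; hence the supremum defining $t_{\alpha,N}$ is attained and $R(t_{\alpha,N})\le\alpha$, giving the claimed control. I expect the main obstacle to be the measure-theoretic identity of the second paragraph: the tempting but wrong instinct is to try to relate the oracle and neighborhood locFDRs \emph{themselves}, whereas the point is subtler — they need not agree pointwise, and only their integrals against the common decision indicator coincide, precisely because that indicator is measurable in both filtrations. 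The monotonicity and continuity bookkeeping of the final step is routine by comparison.
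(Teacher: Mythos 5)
Your proof is correct and follows essentially the same route as the paper's: the heart of both arguments is that $\mathbb{I}(T_{i,N}\le t)$ is $\sigma(\bZ_{i,N})$-measurable, so the tower property turns $\mathbb{E}\bigl[\mathbb{I}(T_{i,N}\le t)\,\mathbb{I}(h_i=0)\bigr]$ into $\mathbb{E}\bigl[\mathbb{I}(T_{i,N}\le t)\,T_{i,N}\bigr]$, making the realized mFDR exactly the ratio defining $t_{\alpha,N}$. Your detour through the oracle $T_i(\bZ)$ and your closing monotonicity/continuity argument are minor elaborations of what the paper leaves implicit (it asserts the final inequality ``by definition of $t_{\alpha,N}$,'' which strictly speaking also requires the no-atoms regularity you flag, since otherwise the ratio could jump above $\alpha$ exactly at the supremum).
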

\begin{proof}
The mFDR of the procedure which rejects $H_0^i$ if $T_{i,N} \le t_{\alpha,N}$ is given by
\begin{align*}
    \text{mFDR} \quad &=  \dfrac{\mathbb{E}\left[\sum_{i = 1}^{K}{\mathbb{I}(T_{i,N}\le t_{\alpha,N},h_i = 0)}\right]}{\mathbb{E}\left[\sum_{i = 1}^{K}{\mathbb{I}(T_{i,N}\le t_{\alpha,N})}\right]} = \dfrac{\sum_{i = 1}^{K}{\mathbb{E}_{(Z_{i,N},h_i)}\left[\mathbb{I}(T_{i,N}\le t_{\alpha,N},h_i = 0)\right]}}{\sum_{i = 1}^{K}{\mathbb{E}_{(Z_{i,N})}\left[\mathbb{I}(T_{i,N}\le t_{\alpha,N})\right]}} \\
    &= \dfrac{\sum_{i = 1}^{K}{\mathbb{E}_{Z_{i,N}}\mathbb{E}_{h_i \vert Z_{i,N}}\left[\mathbb{I}(T_{i,N}\le t_{\alpha,N},h_i = 0)\right]}}{\sum_{i = 1}^{K}{\mathbb{E}_{(Z_{i,N})}\left[\mathbb{I}(T_{i,N}\le t_{\alpha,N})\right]}} =   \dfrac{\sum_{i = 1}^{K}{\mathbb{E}_{Z_{i,N}}\left[\mathbb{I}(T_{i,N}\le t_{\alpha,N})\mathbb{P}(h_i = 0\vert Z_{i,N})\right]}}{\sum_{i = 1}^{K}{\mathbb{E}_{(Z_{i,N})}\left[\mathbb{I}(T_{i,N}\le t_{\alpha,N})\right]}} \\
    &= \dfrac{\sum_{i = 1}^{K}{\mathbb{E}\left[\mathbb{I}(T_{i,N}\le t_{\alpha,N})T_{i,N}\right]}}{\sum_{i = 1}^{K}{\mathbb{E}\left[\mathbb{I}(T_{i,N}\le t_{\alpha,N})\right]}} \le \alpha.
\end{align*}
where the  last inequality follows from the definition of $t_{\alpha,N}$.
\end{proof}
A similar result was given in \cite{heller2021optimal} for the marginal locFDR statistics $\{T_{1,0},\dots,T_{K,0}\}$, but we expect more power by using the procedure based on $T_{i,N}$, $N>0$.

The optimal multiple testing procedure with mFDR control is a solution to the problem:
\begin{equation}\label{complete_rule}
    \max_{\bD:\mathbb{R}^K\to \{0,1\}^K}{TP(\bD)} \quad \text{subject to} \quad mFDR(\bD)\le\alpha.
\end{equation}

Now we concentrate on the class $\mathcal{C}_N$ and find an optimal solution within this restricted class, so the optimization problem is 
\begin{equation}
\label{relaxed_rule}
            \max_{\bD \in \mathcal{C}_N}{TP(\bD)} \quad \text{subject to} \quad mFDR(\bD)\le\alpha.
\end{equation}

\begin{theorem}
\label{theorem:theorem1}
The Procedure detailed in Proposition \ref{prop:prop1} is the 
solution to the OMT problem in Equation~\eqref{relaxed_rule}, i.e., the optimal policy is 
$\bD_N^* = \left(\mathbb{I}(T_{1,N}\le t_{\alpha,N}),\dots,\mathbb{I}(T_{K,N}\le t_{\alpha,N})\right)$
%and $\eta = t_{\alpha,N}$.
\end{theorem}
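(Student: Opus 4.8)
The plan is to first collapse the restricted optimization problem \eqref{relaxed_rule} to one phrased entirely in terms of the neighborhood statistics $T_{i,N}$, and then to solve that reduced problem by a Lagrangian (Neyman--Pearson--type) argument, choosing the multiplier so that the induced threshold coincides with $t_{\alpha,N}$ from Proposition~\ref{prop:prop1}.

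First I would carry out the reduction, which is the conceptual heart of the argument. Fix $\bD\in\mathcal{C}_N$, so $D_i$ is a function of $\bZ_{i,N}$ alone. Starting from $\mathbb{E}(V(\bD))=\mathbb{E}\sum_i D_i(\bZ)T_i(\bZ)$, I would condition each summand on $\bZ_{i,N}$ and apply the tower property: since $\sigma(\bZ_{i,N})\subseteq\sigma(\bZ)$ and $T_i(\bZ)=\mathbb{E}[\mathbb{I}(h_i=0)\mid\bZ]$, we get $\mathbb{E}[T_i(\bZ)\mid\bZ_{i,N}]=\mathbb{P}(h_i=0\mid\bZ_{i,N})=T_{i,N}$, whence
$$\mathbb{E}(V(\bD))=\sum_{i=1}^{K}\mathbb{E}\big[D_i(\bZ_{i,N})\,T_{i,N}\big],$$
and analogously $TP(\bD)=\sum_i\mathbb{E}[D_i(\bZ_{i,N})(1-T_{i,N})]$ and $\mathbb{E}(R(\bD))=\sum_i\mathbb{E}[D_i(\bZ_{i,N})]$. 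The point is that within $\mathcal{C}_N$ the full oracle locFDR $T_i(\bZ)$ may be replaced by its neighborhood projection $T_{i,N}$, so that $T_{i,N}$ is the only feature of the data on which the objective and the constraint depend. Problem \eqref{relaxed_rule} then has exactly the structural form of the unrestricted problem \eqref{complete_rule}, with $T_i(\bZ)$ replaced by $T_{i,N}$ and $\bZ$ by $\bZ_{i,N}$.

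Next I would solve this reduced problem. Rewriting the ratio constraint $mFDR(\bD)\le\alpha$ in linear form $\sum_i\mathbb{E}[D_i(\bZ_{i,N})(T_{i,N}-\alpha)]\le0$, I would introduce a multiplier $\lambda\ge0$ and form the Lagrangian $L(\bD,\lambda)=\sum_i\mathbb{E}\big[D_i(\bZ_{i,N})\{(1+\lambda\alpha)-(1+\lambda)T_{i,N}\}\big]$. For fixed $\lambda$ this is maximized pointwise over $\bD\in\mathcal{C}_N$ by rejecting exactly when $T_{i,N}\le(1+\lambda\alpha)/(1+\lambda)$, i.e., by a threshold rule on $T_{i,N}$; call it $\bD_\lambda$. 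Weak duality then gives $TP(\bD)\le L(\bD,\lambda)\le L(\bD_\lambda,\lambda)$ for every feasible $\bD$ and every $\lambda\ge0$. The plan is to choose $\lambda^*$ so that $(1+\lambda^*\alpha)/(1+\lambda^*)=t_{\alpha,N}$: the map $\lambda\mapsto(1+\lambda\alpha)/(1+\lambda)$ decreases continuously from $1$ to $\alpha$, and since the mFDR of a threshold rule never exceeds its threshold (each retained $T_{i,N}$ is below it), the value $t_{\alpha,N}$ defined in \eqref{main_rule} lies in $[\alpha,1]$ and is therefore attained. If the constraint binds at this $\lambda^*$ (complementary slackness), then $L(\bD_{\lambda^*},\lambda^*)=TP(\bD_{\lambda^*})$, and $\bD_{\lambda^*}=\bD_N^*$ is optimal.

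The main obstacle I anticipate is precisely this complementary-slackness step. The threshold rule controls mFDR by Proposition~\ref{prop:prop1}, but because the law of $T_{i,N}$ may have atoms, the constraint can be slack at $t_{\alpha,N}$, so the dual bound need not be tight and exact optimality among all deterministic rules in $\mathcal{C}_N$ requires care (boundary randomization, or a continuity assumption on the distribution of $T_{i,N}$ making $mFDR$ continuous so the supremum in \eqref{main_rule} is attained with equality). I would therefore either impose such continuity, under which the duality argument closes cleanly, or argue directly that any feasible rule rejecting some hypothesis with $T_{i,N}>t_{\alpha,N}$ can be modified by shifting rejection mass toward smaller values of $T_{i,N}$ without decreasing $TP$ or violating the constraint, thereby reducing any optimum to the threshold form $\bD_N^*$.
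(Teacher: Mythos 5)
Your proposal follows essentially the same route as the paper's proof: the identical tower-property decomposition of $TP(\bD)$ and $\mathbb{E}(V(\bD)-\alpha R(\bD))$ into sums involving $T_{i,N}$, the same Lagrangian whose pointwise maximizer over $\mathcal{C}_N$ is the threshold rule $\mathbb{I}\left(T_{i,N}\le (1+\lambda\alpha)/(1+\lambda)\right)$, and the same identification of the optimal threshold with $t_{\alpha,N}$. If anything, your explicit weak-duality and complementary-slackness closure (and your flagging of the possible atom/slackness issue when the constraint does not bind) is more careful than the paper's own argument, which simply asserts that the constrained problem is equivalent to maximizing the Lagrangian and then selects the largest feasible threshold.
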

\begin{proof}
see Appendix~\ref{appendix:A}.
\end{proof}

\begin{remark}
Though marginal test statistics ($locFDR_0$) based fixed threshold rules were already present long before, the optimality of such rules for dependent test statistics was not known (as far as we know). Therefore,  the above result is not only new for $0<N<K$ but also for $N=0$.
\end{remark}

\begin{corollary}\label{cor1}
The optimal fixed threshold mFDR level $\alpha$ rule based on $T_{i,N+1}$ has more power than the optimal fixed threshold mFDR level $\alpha$ rule based on $T_{i,N}$ for every $N\ge 0$.
\end{corollary}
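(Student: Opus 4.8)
The plan is to exploit the nesting of the restricted classes $\mathcal{C}_N \subseteq \mathcal{C}_{N+1}$ and then appeal to Theorem~\ref{theorem:theorem1}, which identifies the maximizer of \eqref{relaxed_rule} within each class with the corresponding $locFDR_N$ thresholding rule. Power is measured by $TP(\bD)$, and the optimization problem \eqref{relaxed_rule} has the same objective $TP$ and the same constraint $mFDR(\bD)\le\alpha$ for every $N$, differing only in the feasible class. Hence the whole claim reduces to the elementary fact that maximizing a fixed objective over a larger feasible set can only increase its optimal value.

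First I would establish the inclusion $\mathcal{C}_N\subseteq\mathcal{C}_{N+1}$. The key observation is that for every $i$ the neighborhood vector $\bz_{i,N}$ is a subvector of $\bz_{i,N+1}$: since $\max(i-(N+1),1)\le\max(i-N,1)$ and $\min(i+(N+1),K)\ge\min(i+N,K)$, the coordinates appearing in $\bz_{i,N}$ form a contiguous subset of those appearing in $\bz_{i,N+1}$. Consequently any component rule $D_i(\bz_{i,N})$ can be rewritten as a component rule of $\bz_{i,N+1}$ that simply ignores the (at most two) extra coordinates, so every $\bD\in\mathcal{C}_N$ also lies in $\mathcal{C}_{N+1}$.

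Next I would compare the two optimization problems directly. Let $\bD_N^*$ and $\bD_{N+1}^*$ be the optimal rules supplied by Theorem~\ref{theorem:theorem1} for $\mathcal{C}_N$ and $\mathcal{C}_{N+1}$, respectively; both solve \eqref{relaxed_rule} under the common constraint $mFDR(\bD)\le\alpha$. By the inclusion just established, $\bD_N^*$ is a feasible point of the $\mathcal{C}_{N+1}$ problem, since it belongs to $\mathcal{C}_{N+1}$ and already satisfies the mFDR constraint. Because $\bD_{N+1}^*$ maximizes $TP$ over all feasible points of that problem, we obtain $TP(\bD_{N+1}^*)\ge TP(\bD_N^*)$, which is exactly the asserted monotonicity of power in $N$.

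I do not expect a genuine obstacle here: the argument is the standard "maximum over a larger set is at least as large" monotonicity, and the only point requiring care is the verification of the class inclusion, in particular handling the truncation at the boundaries $\max(\cdot,1)$ and $\min(\cdot,K)$ so that $\bz_{i,N}$ is genuinely a subvector of $\bz_{i,N+1}$ for the endpoint tests as well. One could additionally remark that the inequality is typically strict whenever the $(N+1)$-st neighbors carry nonredundant information about $h_i$, but the weak monotonicity stated in the corollary follows from the nesting alone.
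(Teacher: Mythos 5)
Your proof is correct and follows essentially the same route as the paper's: invoke Theorem~\ref{theorem:theorem1} to identify each optimal rule with the corresponding $locFDR_N$ thresholding rule, then use the nesting $\mathcal{C}_N\subseteq\mathcal{C}_{N+1}$ and the fact that maximizing the same objective under the same constraint over a larger feasible set cannot decrease the optimum. The only difference is that you spell out the class inclusion (including the boundary truncations) which the paper dismisses as obvious, and you correctly note that the nesting alone yields weak rather than strict monotonicity.
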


\begin{proof}
Theorem~\ref{theorem:theorem1} says that the optimal fixed threshold mFDR level $\alpha$ rule based on $T_{i,N}$ is the solution to problem \eqref{relaxed_rule}. But it is obvious that $\mathcal{C}_N \subset \mathcal{C}_{N+1}$ for every $N\ge 0$. Hence the proof.
\end{proof}

\section{A data driven procedure %Computation and estimation of the $\textbf{locFDR}_N$ statistics
}\label{sec:CompEst}
\subsection{The computational complexity of the $locFDR_N$ statistics}
To implement the procedure described in Proposition~\ref{prop:prop1}, we need to compute the statistics $\{T_{1,N},\dots,T_{K,N}\}$. We denote $l = \min(i+N,K) - \max(i-N,1) + 1$. Clearly $l \le 2N + 1$, so  $T_{i,N}$ can be written as
\begin{align} \label{nbd_locfdr}
    T_{i,N} & = \dfrac{\sum_{\bh_{i,N}, h_i =0}{\mathbb{P}(\bZ_{i,N} \vert \bh_{i,N}) \mathbb{P}(\bh_{i,N})}}{\sum_{\bh_{i,N}}{\mathbb{P}(\bZ_{i,N} \vert \bh_{i,N}) \mathbb{P}(\bh_{i,N})}} 
\end{align}
where $\mathbb{P}(\bh_{i,N})$ is the probability distribution of $\bh_{i,N}$ and $\mathbb{P}(\bZ_{i,N}\vert\bh_{i,N})$ is the conditional density of $\bZ_{i,N}$ given $\bh_{i,N}$. The sum in the denominator above is over $l$ dimensional binary vectors and hence requires $2^l$ evaluations of $\mathbb{P}(\bZ_{i,N} \vert \bh_{i,N})$ and each of these takes time of at most $O(l^2)$. Hence calculating the denominator requires time $O(2^l l^2)$. Note that all the calculations in the denominator can be stored with $O(2^l)$ memory and hence to evaluate the numerator, we do not need any extra computation time. Since $O(2^l l^2) \le O(2^{2N+1}(2N + 1)^2)$,  the worst case runtime for computing all the $locFDR_N$s is $O(K2^{2N+1}(2N + 1)^2)$. 

Note that for finding the globally optimal rule, we had to calculate the oracle $locFDR$s $T_1,\dots,T_K$. If $g$ has no special structure, calculating these quantities requires runtime of $O(K2^K)$. 
\subsection{Estimation strategies}\label{estimation_strategies}
Although computations of the statistics of interest $\{T_{1,N},\dots,T_{K,N}\}$ are simple with known probabilistic structure, in real life examples we need to estimate the process parameters from the data. Below we discuss how to estimate these quantities in situation where the dependence can be assumed to be of short range, which approximates many real life applications. 

Assume we have the following model
\begin{equation}{\label{model}}
    \begin{gathered}
    \bZ\vert \bh \sim \mathcal{N}_K(b\boldsymbol{\bh},\Sigma+\tau^2 diag(\bh)) \\
    h_i \overset{i.i.d.}{\sim} Ber(\pi), \quad i=1,\ldots,K,
\end{gathered}
\end{equation}
where $\Sigma$ is known, with diagonal entries of one. 
This model assumes that a null (i.e., with $h_i = 0$) SNP's  test statistic has a $\mathcal{N}(0,1)$ distribution, and a nonnull (i.e., with $h_i = 1$)  SNP's test statistic has a   $\mathcal{N}(b,1+\tau^2)$ distribution. It is reasonable, e.g.,  for testing the correlations between covariates and outcome in a linear model, where $\Sigma$ is determined by the design matrix of the covariates,  see \S~\ref{sec:UKB} for details. 
This model is similar to the ``RSS-BVSR" model considered in \cite{zhu2017bayesian}, and it reduces to the two-group model of \cite{efron2001empirical} if $\Sigma$ is a diagonal matrix. 

In practical applications, we want to estimate the model \eqref{model} from the observed summary statistics $\bZ$.  If  we can assume the dependence to be of short ranged (as in GWAS), then we can find a subset of $z$-scores. which are approximately independent. Due to LD, nearby SNPs are correlated and SNPs which are far apart tend to be independent of each others. As a result, if we select the z-scores of the SNPs based on their position on the chromosome, we can get approximately independent collection of $\Tilde{K}$ $z$-scores. After we select these $z$-scores forestimation, they can be assumed to follow the following simpler model two group model: 
\begin{equation}\label{eq-mixturemodel}
\begin{gathered}
    h_i \overset{i.i.d.}{\sim} Ber(\pi)\\
    z_i \vert h_i \overset{ind}{\sim} \mathcal{N}(bh_i,1+\tau^2h_i)
\end{gathered}
\end{equation} It is  easier to find the estimates of $\pi,b, \tau$ from the model \eqref{eq-mixturemodel} using the subset of $z$-scores which we call $\bz_s$. Once these parameters are estimated, we plug  these estimates into model \eqref{model}. Specifics follow.

Our first step is thus the selection of the sub-vector $\bz_s$. We do not assume the i.i.d. two group model on $\bz$, but we assume that  most of the time we are able to extract a subset of z-scores which either exactly or approximately follow a i.i.d. two group model due to the assumption of short range strong dependency. Details of how estimation is done in case Banded(1), AR(1) covariance are given in section~\ref{subsec: Data_Driven}.

Next, we need to estimate $\pi, b, \tau^2$ using $\bz_s$. We apply the EM algorithm \citep{dempster1977maximum} to estimate the parameters as suggested in \cite{peel2000finite}. We consider the following two methods, which differ in the way they handle the estimation of $\pi$. 

\textbf{Estimating Parameters 1:} The updates of the parameters are given below
\begin{equation}\label{equation:est_fullem}
\begin{gathered} 
    \pi^{(t+1)} = \dfrac{1}{\Tilde{K}}\sum_{i}{\mathbb{P}(h_i = 1 \vert z_i,\boldsymbol{\theta}^{(t)})}\\
    b^{(t+1)} = \dfrac{\sum_{i}\mathbb{P}(h_i = 1 \vert z_i,\boldsymbol{\theta}^{(t)})z_i}{\sum_{i}\mathbb{P}(h_i = 1 \vert z_i,\boldsymbol{\theta}^{(t)})}\\
    {\tau^2}^{(t+1)} =\max\left\{0, \dfrac{\sum_{i}\mathbb{P}(h_i = 1 \vert z_i,\boldsymbol{\theta}^{(t)})(z_i-b^{(t+1)})^2}{\sum_{i}\mathbb{P}(h_i = 1 \vert z_i,\boldsymbol{\theta}^{(t)})}-1\right\}
\end{gathered}
\end{equation}

where $\boldsymbol{\theta}^{(t)} = (\pi^{(t)},b^{(t)},{\tau^2}^{(t)})$ are the estimates of the parameters in the $t$-th step and $\bz_s = (z_1,\dots z_{\Tilde{K}})$. The derivation of the equations~\eqref{equation:est_fullem} is detailed in appendix~\ref{appendix:B}. 

\textbf{Estimating Parameters 2:}  Alternatively we  estimate $\pi$ from $\bz$ using the well known method of \cite{jin2007estimating} and plug-in this estimate of $\pi$ in the EM-iterations. Let $\hat{\pi}$ be the estimate from $z$-scores $\bz$ of non-null proportion $\pi$. Then the updates of the other parameters are given below
\begin{equation}\label{equation:estimate_partialem}
\begin{gathered}
    b^{(t+1)} = \dfrac{\sum_{i}\mathbb{P}(h_i = 1 \vert z_i,\boldsymbol{\psi}^{(t)})z_i}{\sum_{i}\mathbb{P}(h_i = 1 \vert z_i,\boldsymbol{\psi}^{(t)})} \\
    {\tau^2}^{(t+1)} =\max\left\{0, \dfrac{\sum_{i}\mathbb{P}(h_i = 1 \vert z_i,\boldsymbol{\psi}^{(t)})(z_i-b^{(t+1)})^2}{\sum_{i}\mathbb{P}(h_i = 1 \vert z_i,\boldsymbol{\psi}^{(t)})}-1\right\}
\end{gathered}
\end{equation}
where $\boldsymbol{\psi}^{(t)} = (\hat{\pi},b^{(t)},{\tau^2}^{(t)})$ are the estimates of the parameters in the $t$-th step. The derivation of the the equations~\eqref{equation:estimate_partialem} follows similarly as done in appendix~\ref{appendix:B}.

 Using the estimated parameters $\hat b, \hat \pi, \hat \tau$, we estimate the $K$  $locFDR_N$,  and denote  these statistics by $\{\hat T_{1,N},\dots,\hat T_{K,N}\}$. 
For inference using the estimated $locFDR_N$s,  we need to determine the rejection cutoff, so that  hypotheses with estimated $locFDR_N$s below that cutoff will be rejected. We determine this cutoff numerically as follow. We draw $B$ different data sets of $z$-scores from the mixture model in \eqref{eq-mixturemodel} with estimated parameters $\hat b, \hat \pi, \hat \tau$. We estimate $Q(t) = \dfrac{\mathbb{E}\sum_{i = 1}^{K}{\mathbb{I}(\hat T_{i,N}\le t)\hat T_{i,N}}}{\mathbb{E}\sum_{i = 1}^{K}{\mathbb{I}(\hat T_{i,N}\le t)}}$ by $\hat{Q}(t) = \dfrac{\sum_{b=1}^{B}\sum_{i = 1}^{K}{\mathbb{I}(\hat T^b_{i,N}\le t)\hat T^b_{i,N}}}{\sum_{b=1}^{B}\sum_{i = 1}^{K}{\mathbb{I}(\hat T^b_{i,N}\le t)}}$ where $\{\hat T^b_{1,N},\dots,\hat T^b_{K,N}\}$ has been calculated from the $b$-th $z$-vector and estimated parameters $\hat b, \hat \pi, \hat \tau$. Now we perform a simple line search to find  \begin{equation}\label{eq-estimated cutoff}\hat{t}_{\alpha,N}= \sup \left\{t: \hat{Q}(t) \le \alpha\right\}.\end{equation}

We summarize the whole pipeline concisely  in Algorithm \ref{alg:algo1}. 

\begin{algorithm}
\label{alg:algo1}
{\fontsize{11pt}{11pt}\selectfont}
\begin{algorithmic}[1]

  \scriptsize
  \footnotesize{\STATE {Step 1 - \textbf{Input:} The vector of $z$-scores $\bz$, covariance among the $z$-scores $\Sigma$, realistic $N \ge 0$, $B > 0$.}
  \STATE {Step 2 - Choose a sub vector $\bz_s$ of approximately independent $z$-scores from the whole vector of $z$-scores $z$ as described in pragraph \textbf{Selecting sub-vector}} in \S~\ref{estimation_strategies}.
  \STATE {Step 3 - Given $\bz_s$ and $\bz$, estimate $b, \pi, \tau$ using either paragraph \textbf{Estimating Parameters 1} or paragraph \textbf{Estimating Parameters 2} as described in \S~\ref{estimation_strategies} and call these estimates $\hat{b}, \hat{\pi}, \hat{\tau}$.}
  \STATE {Step 4 - Calculate $\{T_{1,N},\dots,T_{K,N}\}$ using the estimated parameters in step 3 and the $z$-scores by the formula \eqref{nbd_locfdr} and call these estimated statistics $\{\hat T_{1,N},\dots,\hat T_{K,N}\}$}.
  \STATE {Step 5 - Calculate the estimated cutoff $\hat{t}_{\alpha,N}$ in \eqref{eq-estimated cutoff}}. %as described in paragraph~\textbf{Determining cut-off} in \S~\ref{estimation_strategies}}.
  \STATE { Step 6 - \textbf{Output:} The set $\mathcal{R}_\alpha = \{i: \hat T_{i,N}\le \hat{t}_{\alpha,N}$\} }} of rejected hypothesis.
\end{algorithmic}
\caption{The Data-driven $locFDR_N$ procedure for level $\alpha$ mFDR control}
\label{alg:algo1}
\end{algorithm}

\section{Simulations}
\label{sec:verify}
\subsection{Simulation from Mixture of Gaussians (Oracle)}
Here we show all the results with known values of the parameters of the distribution described by \eqref{model}.

In the rest of the paper, we will consider covariance structures for $\Sigma=(\sigma_{i,j})_{1\le i, j\le K}$ among the following four: 
\begin{itemize}
    \item \textbf{AR(1):} 
    \begin{equation}\label{eq:AR(1)}
    \sigma_{i,j} = \rho^{|i-j|}
    \end{equation} 
    for $1\le i, j\le K$ where $-1 \le \rho \le 1$. 
    \item \textbf{Banded(1):} \begin{equation}\label{eq:Banded(1)}
    \sigma_{i,j} = 
\begin{cases}
    1,& \text{if } i=j\\
    \rho,& \text{if } |i-j|=1\\
    0,& \text{if } |i-j|>1
\end{cases}
\end{equation}
    for $1\le i, j\le K$ where $-0.5 \le \rho \le 0.5$. 
    \item \textbf{Long Range:} \citep{fan2017estimation} 
    \begin{equation}\label{eq:FGN}
    \sigma_{ij} = \frac{1}{2}\left(\vert\vert i-j\vert+1\vert^{2H}-2\vert i-j \vert^{2H}+\vert\vert i-j\vert -1 \vert^{2H}\right)\end{equation}
    $1\le i, j\le K$, where $\frac{1}{2}< H < 1$ with $H \to \frac{1}{2}$ indicating weak dependence and $H \to 1$ indicating strong long range dependence.
    \item \textbf{Equicorrelated:} \begin{equation}\label{eq:equi}
    \sigma_{i,j} = \begin{cases}
    1,& \text{if } i=j\\
    \rho,& \text{if } |i-j|\ne 0
\end{cases}
    \end{equation}
    for $1\le i, j\le K$ where $-1 \le \rho \le 1$. 
\end{itemize}

Figure \ref{fig:fig1} shows a scatter plot of $T_{i,N}$ versus $T_{i,N+1}$, and the marginal density of $T_{i,N}$, for $N=0,\ldots,5$. %In all the graphs we want to see how closely $T_{i,N}$ based rule approximates the oracle locFDR (which are not computable) based rules. Clearly 
As we move to comparison between higher order locFDRs, the simulated locFDRs tend to concentrate around the 45 degree line. The concentration is particularly good for AR(1) and the long-range correlation structure. The density plots show that for all dependencies, the densities are similar for $N>0$ and  strikingly different than  the density for $N=0$. 
\begin{figure} \label{fig:fig1}
\begin{subfigure}{\textwidth}
  \centering
  % include first image
  \includegraphics[width=0.8\linewidth]{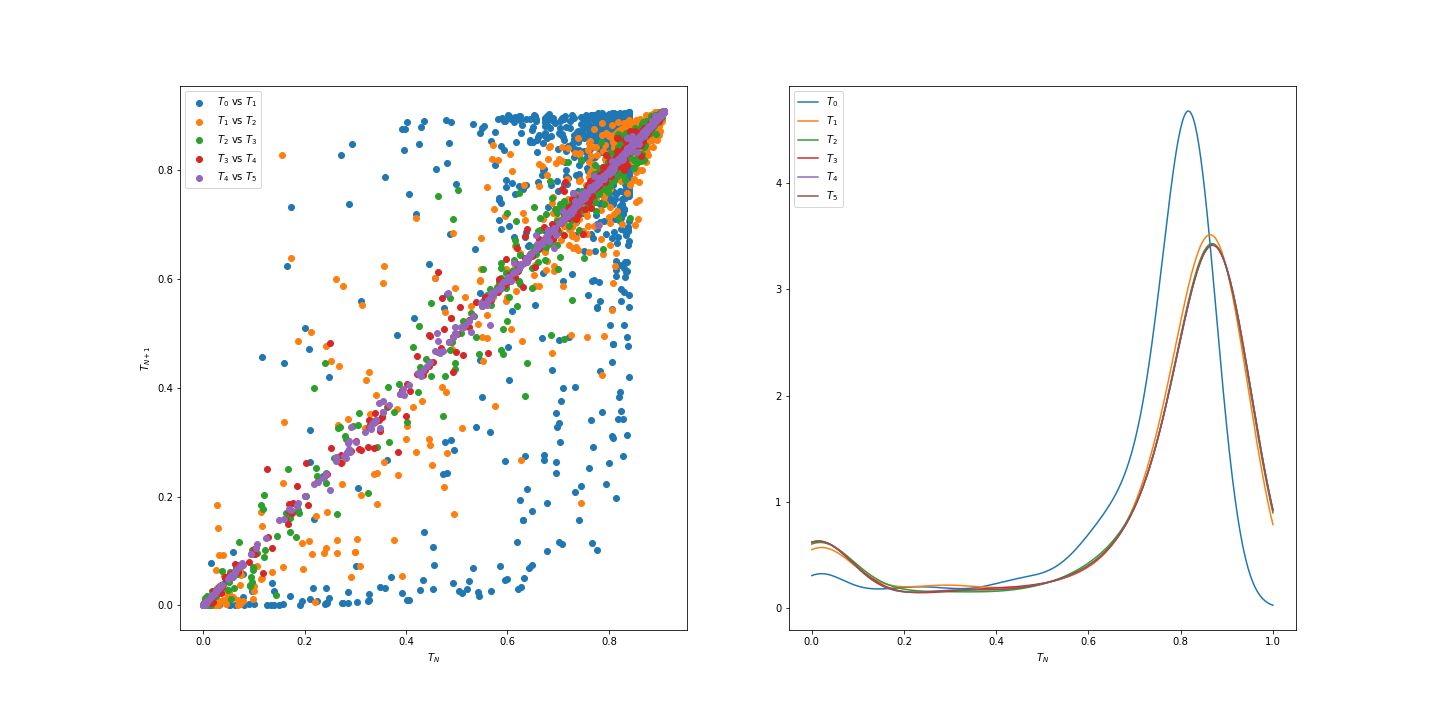}  
  \caption{AR(1) covariance with $\rho = 0.8$ as described in equation~\eqref{eq:AR(1)}}
  \label{fig:sub1}
\end{subfigure}

%\newline

\begin{subfigure}{\textwidth}
  \centering
  % include first image
  \includegraphics[width=0.8\linewidth]{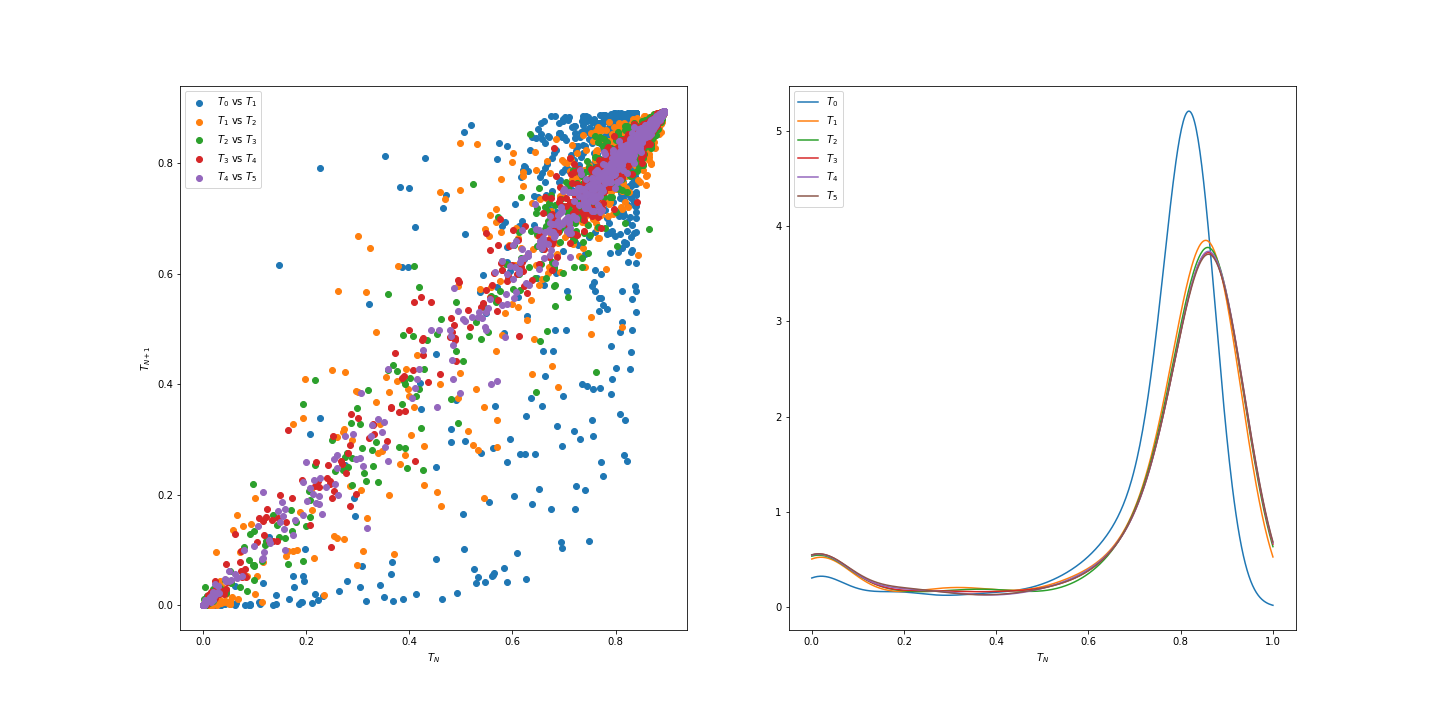}  
  \caption{Long-range covariance with $H = 0.9$ as described in equation \eqref{eq:FGN}}
  \label{fig:sub2}
\end{subfigure}

\begin{subfigure}{\textwidth}
  \centering
  % include first image
  \includegraphics[width=0.8\linewidth]{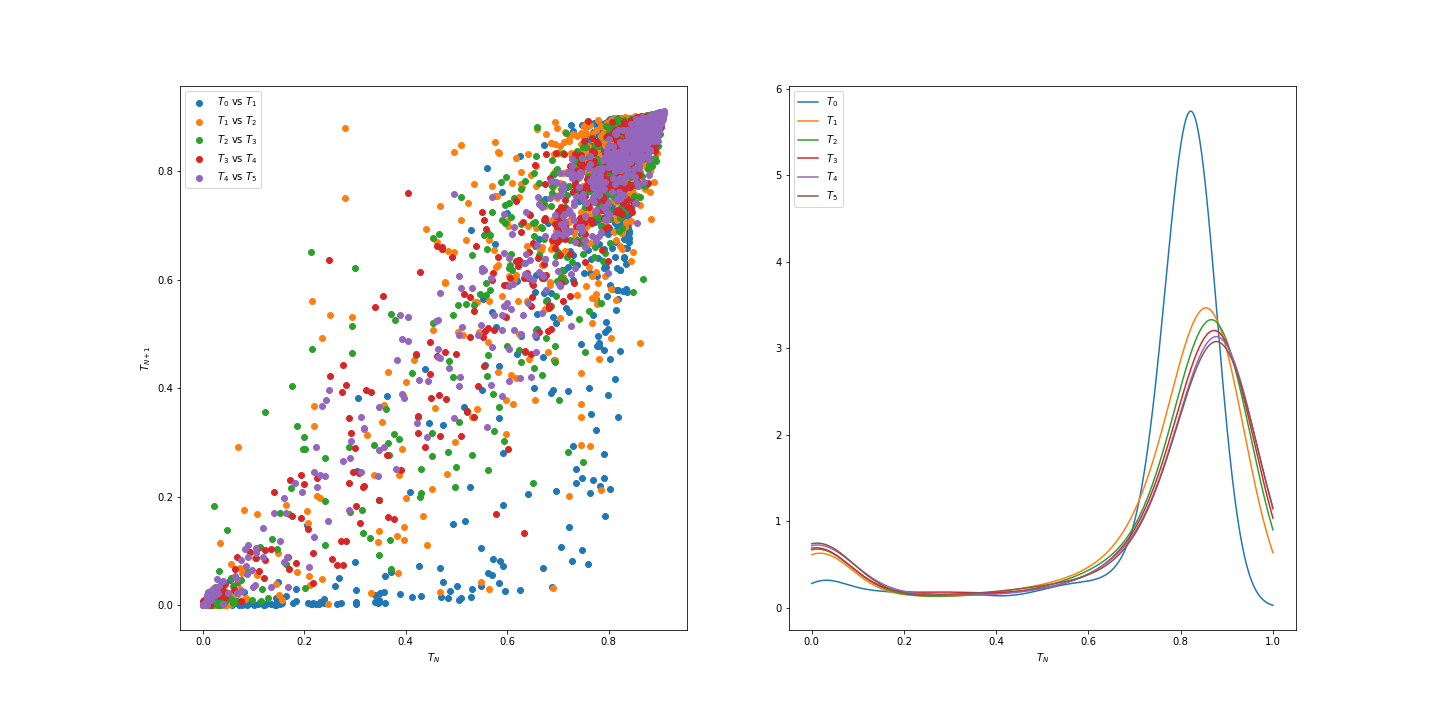}  
  \caption{Equicorrelated covariance with $\rho = 0.8$ as described in equation \eqref{eq:equi}}
  \label{fig:sub3}
\end{subfigure}
\caption{The scatter plots ($T_{i,N}$, $T_{i,N+1}$) (left column) and the density of $T_{i,N}$ (right column), for $N=0,\ldots,5$.  Rows differ  by the dependence structure, $\Sigma$, among the $z$-scores. The data generation of the $z$-score vector is  model~\eqref{model}, with $K=1000, \tau=2,\pi=0.3$, and $b=0.2$.}
\label{fig:fig1}
\end{figure}
%\begin{newpage}
%    \thispagestyle{empty}
%\end{newpage}

We make the following observations from the results in Table~\ref{table:table1}:
\begin{itemize}
    \item \textbf{Power}: as expected,  the power increases with $N$.
    \item \textbf{Effect of Correlation}: What makes the $locFDR_N$ most interesting is the rate of power increase, as a function of $N$: the power increase is  large when moving from  $N=0$ to $N=1$, and it is much smaller when moving from $N=1$ to $N=2$. This behaviour is common observed under all dependency structures considered. 
    The highest power increase was observed for the equicorrelated covariance structure. 
    \item \textbf{Variability of the FDP}: When there are strong correlations, FDP is highly variable. In all settings, the procedure with marginal statistics shows  higher variability in FDP compared to procedures that use $locFDR_N$ statistics with $N>0$. We consistently observe from simulations that 
 the variability of the FDP is decreasing in $N$, so this is an added advantage of the use of $T_{i,N}$ with $N$ as large as (computationally) possible.
    \item \textbf{mFDR vs FDR}:  mFDR and FDR differ when there is strong correlations using $locFDR_0$ statistics. But when we use the $locFDR_2$ statistics, more discoveries are made and the difference between  mFDR and FDR is small, regardless of the correlation structure. 
    \item \textbf{Cutoff}: All the simulations shows that cutoffs $t_{\alpha,N}$ for fixed $\alpha$ are increasing in $N$,  and  the change from $N=1$ to $N=2$ is much smaller than the change from $N=0$ to $N=1$. The decrease of the sequence $t_{\alpha,N+1}-t_{\alpha,N}$ depends on the covariance structure,  and the rate is higher in the short range setting, than in  the long range or equicorrelated setting. 
\end{itemize}

%%%%%%%%%%%% Oracle %%%%%%%%%%%%%%%%%%%%%

%\parbox[t]{2mm}{\multirow{3}{*}{\rotatebox[origin=c]{90}{Data Driven}}}
\vspace{5mm}
\hspace{0mm}
\begin{table}\label{table:table1}
\begin{tabular}{|l|l|c|c|c|c|c|c|}
\hline
\multicolumn{8}{|c|}{Fixed Parameters in all Simulations, $K=1000$, $\pi = 0.3$} \\
\hline
\multicolumn{1}{|c|}{Correlation}& \multicolumn{1}{|c|}{Method} & \multicolumn{2}{c|}{$\boldsymbol{T}_{0}$-rule} & \multicolumn{2}{c|}{$\boldsymbol{T}_{1}$-rule} & \multicolumn{2}{c|}{$\boldsymbol{T}_{2}$-rule}  \\
\cline{2-8}
 & \multicolumn{1}{|c|}{Error} & Estimate & s.e. & Estimate & s.e. & Estimate & s.e.\\
\hline

AR(1) &  mFDR & 0.0501 & 0.0015 & 0.0499& 0.0009 & 0.0508 & 0.0008 \\
$\rho=0.8$ &  FDR & 0.0494 & 0.0014 & 0.0498 & 0.0009 & 0.0507 & 0.0008 \\
$b=0$ &  TP & 61.19  & 0.352 & 124.08 & 0.428 & 139.364 & 0.448\\
$\tau = 2$ & Cutoff (t) & 0.1742 & - & 0.2356 & - & 0.2729 & - \\
\hline

Long Range & mFDR & 0.0489 & 0.0015 & 0.0506 & 0.001 & 0.0497 & 0.0010 \\
$H=0.8$ & FDR & 0.0481 & 0.0014 & 0.0506 & 0.001 & 0.0497 & 0.0010 \\
$b=0$ & TP & 61.95 & 0.3413 & 84.256 & 0.3815 & 88.37 & 0.3944\\
$\tau = 2$ & Cutoff (t) & 0.1742 & - & 0.2014 & - & 0.2074 & - \\
\hline

Equi & mFDR & 0.0731 & 0.0249 & 0.0515 & 0.0032 & 0.0501 & 0.0012 \\
$\rho=0.8$ & FDR & 0.0149 & 0.0035 & 0.0489 & 0.0023 & 0.0497 & 0.0012 \\
$b=0$ & TP & 62.026  & 0.8173 & 121.374 & 0.4404 & 146.29 & 0.4835\\
$\tau = 2$ & Cutoff (t) & 0.1742 & - & 0.2361 & - & 0.2903 & - \\
\hline

AR(1) & mFDR & 0.0510 & 0.0012 & 0.0487 & 0.001 & 0.0487 & 0.001 \\
$\rho=0.5$ & FDR & 0.0509 & 0.0012 & 0.0488 & 0.001 & 0.0487 & 0.0010 \\
$b=0$ & TP & 62.112  & 0.3465 & 83.85 & 0.3989 & 85.824 & 0.4019 \\
$\tau = 2$ & Cutoff (t) & 0.1742 & - & 0.1989 & - & 0.2029 & - \\
\hline

Long Range & mFDR & 0.0506 & 0.0013 & 0.0505 & 0.0011 & 0.0513 & 0.0011 \\
$H=0.7$ & FDR & 0.0504 & 0.0013 & 0.0507 & 0.0011 & 0.0514 & 0.0011 \\
$b=0$ & TP & 61.922 & 0.3503 & 70.282 & 0.3748 & 71.504 & 0.3786\\
$\tau = 2$ & Cutoff (t) & 0.1742 & - & 0.1823 & - & 0.1842 & - \\
\hline

Equi & mFDR & 0.0452 & 0.0049 & 0.0502 & 0.0023 & 0.0496 & 0.0014 \\
$\rho=0.5$ & FDR & 0.0304 & 0.0028 & 0.0473 & 0.0021 & 0.0491 & 0.0014 \\
$b=0$& TP & 61.848  & 0.5322 & 81.152 & 0.4022 & 91.166 & 0.4055\\
$\tau = 2$ & Cutoff (t) & 0.1742 & - & 0.1971 & - & 0.2088 & - \\
\hline
\end{tabular}
\caption{Power Gain with different covariance structure (Known Parameters)\\
For different parameters of the model~\eqref{model} (rows), using the procedure with $T_{i,0}$ (columns 3-4), $T_{i,1}$ (columns 5-6), and $T_{i,2}$ (columns 7-8): 
the ``Estimate" column indicates the estimates of different error rates and the ``s.e." column indicates the the standard error of these estimates. Standard Errors for mFDR were determined by bootstrapping, based on 500 data generations. }
\label{table:table1}
\end{table}

%%%%%%%%%%%%%%%%%%%%%%%%%%%%%%%%%%%%%%%%%

\subsection{Simulation from Mixture of Gaussian (Data Driven)}
\label{subsec: Data_Driven}

Here we show the performance of our procedure when we estimate the unknown parameters of distribution $b, \pi, \tau$ in model~\eqref{model} using algorithm~\ref{alg:algo1}. From now on in all the tables we denote ``$\boldsymbol{T}_N$-rule", ``Sun$\&$Cai", ``BH" and ``ABH", respectively,  the testing procedure based on $\{T_{i,N}, 1\le i \le K\}$ for $N=0,1,2$, the testing procedure developed in \cite{sun2007oracle}, the Benjamini-Hochberg procedure  \citep{benjamini1995controlling}, and the adaptive Benjamini-Hochberg, which incorporates the plug-in estimate of $\pi$ given by the method ``Est-EM" and ``Est-S\&C" (see, e.g., \citealt{benjamini2006adaptive}).

In Table~\ref{table:table2} by ``Est-EM" we mean estimating the parameters of the distribution using equations~\eqref{equation:est_fullem}. For AR(1) covariance, we take the subset $\{Z_2,Z_{6},\dots,Z_{3998}\}$ of $1000$ very weakly correlated z-scores and estimate the parameters using EM algorithm and in this case we get slightly inflated mFDR level. For Banded(1) covariance, we take the subset $\{Z_2,Z_4,Z_6,\dots,Z_{2000}\}$ of independent z-scores to estimate the parameters in this situation to get mFDR control for data driven rule. In this same table by ``Est-S\&C" we mean estimating the parameters of the distribution using equations~\eqref{equation:estimate_partialem}. 

 In Table \ref{table:table2} , we observe that 
the mFDR level is more conservative with  ``Est-S\&C" than with ``Est-EM". Due to this conservative nature of ``Est-S\&C", we make less discoveries in the case of ``Est-S\&C" compared to ``Est-EM". Nevertheless,  even with  ``Est-S\&C", using  the statistics $\boldsymbol{T}_N$ with $N=1,2$, results in much higher power than the power of $\boldsymbol{T}_0$ or the other competitors. %Qualitative conclusions are same as oracle simulations.

\vspace{5mm}
\hspace{5mm}
\begin{table}
\begin{tabular}{|l|l|c|c|c|c|c|c|}
\hline
\multicolumn{8}{|c|}{Fixed Parameters in all Simulations $b=0$, $\tau = 2$} \\\hline
\multicolumn{1}{|c|}{Correlation} & \multicolumn{1}{|c|}{Error} & \multicolumn{1}{c|}{$\boldsymbol{T}_{0}$-rule} & \multicolumn{1}{c|}{$\boldsymbol{T}_{1}$-rule} & \multicolumn{1}{c|}{$\boldsymbol{T}_{2}$-rule} &   \multicolumn{1}{c|}{Sun$\&$Cai} & \multicolumn{1}{c|}{BH} & \multicolumn{1}{c|}{ABH}\\
\cline{2-8}
\hline

Banded(1),Est-EM& mFDR & 0.0494 & {0.0511} & 0.0495  & 0.0435 & 0.0327 & 0.0489\\
$\rho=0.5$,$\pi = 0.3$& FDR & 0.0486 & 0.0508 & 0.0494  & 0.0430 & 0.0324 & 0.0480\\
$K=2000$& TP & 124 & 174 & \bf{184} & 118 & 109 & 123\\
\hline

Banded(1),Est-EM & mFDR & 0.0497 & 0.0498 & {0.0509}  & 0.0475 & 0.0396 & 0.0497\\
$\rho=0.5$,$\pi = 0.2$ & FDR & 0.0480 & 0.0495 & 0.0506  & 0.0464 & 0.0383 & 0.0481\\
$K=2000$ & TP & 69 & 108 & \bf{118} & 68 & 65 & 70\\
\hline

AR(1),Est-EM& mFDR & {0.051} & {0.0526} & {0.0520}  & 0.0441 & 0.0356 & 0.0510\\
$\rho=0.5$,$\pi = 0.3$& FDR & 0.0503 & 0.0522 & 0.0517  & 0.0439 & 0.0355 & 0.0506\\
$K=4000$ & TP & 248 & 336 & \bf{344} & 237 & 219 & 249\\
\hline

AR(1),Est-EM& mFDR & {0.0522} & {0.0524} & {0.0525}  & 0.050 & 0.0416 & 0.0534\\
$\rho=0.5$,$\pi = 0.2$& FDR & 0.0511 & 0.0521 & 0.0521  & 0.0495 & 0.0411 & 0.0525\\
$K=4000$ & TP & 136 & 202 & \bf{206} & 134 & 127 & 137\\
\hline

AR(1),Est-S\&C & mFDR & 0.0342 & 0.0414 & 0.0417  & 0.0444 & 0.0333 & 0.0417\\
$\rho=0.5$,$\pi = 0.3$ & FDR & 0.0338 & 0.0413 & 0.0416  & 0.0440 & 0.0329 & 0.0413\\
$K=4000$ & TP & 218 & 313 & \bf{319} & 236 & 218 & 233\\
\hline

AR(1),Est-S\&C & mFDR & 0.0381 & 0.0423 & 0.0415  & 0.046 & 0.039 & 0.0447\\
$\rho=0.5$,$\pi = 0.2$,& FDR & 0.0375 & 0.042 & 0.0412 & 0.0455 & 0.0385 & 0.0441\\
$K=4000$ & TP & 124 & 191 & \bf{195} & 133 & 126 & 132\\
\hline
\end{tabular}
\caption{Power Gain with different covariance structure (Data Driven)\\
For different parameters of the model~\eqref{model} (rows),for each multiple testing procedure, we provide the FDR, mFDR, and power. The novel $T_N$-rules applied Algorithm \ref{alg:algo1} with  $B=50$.  
In bold, the most powerful procedure. 
Bootstrap standard error (s.e.) of all the mFDRs (for the combination ``AR(1),Est-EM") in $3$rd and $4$th rows are at most $0.0011$ and $0.0015$ respectively. Based on 200 data generations. } 
\label{table:table2}
\end{table}

\section{UK biobank height data analysis}
\label{sec:UKB}

Assume we have a standardized phenotype $(\boldsymbol{y}^{n\times 1})$ and standardized Genotype data $(X^{n \times p})$ available for $n$ individuals and $p$ different SNPs. We want to find the  SNPs that are associated with the phenotype, while  controlling  the mFDR at level 0.05,  using the $locFDR_N$ statistics for $N>0$. In our GWAS example we have a  sufficient number of individuals so  $n > p$. In order to apply our method we need z-scores and correlation among z-scores as input data. For finding the z-scores, we use the linear model,
$$\boldsymbol{y} = X\boldsymbol{\beta} + \boldsymbol{\epsilon}$$
and we know the least squares estimate of $\bb$ is $\hat{\bb} = (X^TX)^{-1}X^T\by$ with $\hat{\bb} \sim \mathcal{N}(\bb,\sigma^2(X^TX)^{-1})$.
Let $S = diag((X^TX)^{-1})$,  and let the estimated variance of $\epsilon_i$ be $\hat{\sigma^2} = \frac{1}{n-p}{(\by-X\hat{\bb})^T(\by-X\hat{\bb})}$. Our z-scores vector is $\bz = \frac{1}{\hat{\sigma}} S^{-1/2}\hat{\bb}$, and the correlation among the z-scores is 
$\Sigma = S^{-1/2}(X^TX)^{-1}S^{-1/2}$. So, our input data is $(\bz,\Sigma)$ and our working model is the two group model \eqref{model}.

We apply our method to some selected SNPs of chromosome 20, as described next.  We have the genotype data of around 500K individuals and approximately 20K SNPs for each of them from the UK biobank \citep{sudlow2015uk}. The individuals with missing height  (a small number) were removed from our analysis. In the raw data for each SNP we also have many missing individuals. Moreover, SNPs that are adjacent to each other tend to be highly correlated. 
  Therefore, we choose about 3.5K SNPs from chromosome 20 such that between any two of them there are at least 5 SNPs and none of these 3.5K SNPs has more than 10\% missing percentage. We impute the number of major allele count of each missing individuals for a particular SNP by the mean of all the available individuals' major allele count of that SNP. Next, we compute the $z$ scores vector and $\Sigma$, and use them in the  ``Est-S\&C" data driven procedure. The estimated mixture components are 
\begin{equation*}
    \hat{\pi} = 0.2,\quad \hat{b} = 0.0918,\quad \hat{\tau^2}=2.477
\end{equation*}
In Table~\ref{table:table3} we show simulation results with parameters estimated from the data $(\by,X)$ (essentially a parametric bootstrap experiment) after repeating the process $B$ times and aggregating the results. First row of the table shows the result where we assume the parameters are unknown and we estimate the parameters in each run (we take $B=200$ here). Second row of the table shows the result where we assume the parameters are known and we take $B=500$ here. Our procedures control the mFDR at the nominal 0.05 level.

The number of rejected SNPs for the real data at level $\alpha = 0.05$ by our method and the competitors with 3.5K SNPs are given  in Table~\ref{table:table4}. Though ``S$\&$C"  and ``$\boldsymbol{T}_0$-rule" use the same statistics for inference, due to different estimation methods of these statistics and different way of determining cutoffs, they reject different number of SNPs. The ``$\boldsymbol{T}_2$-rule" rejects $30$ more SNPs than the ``S$\&$C" method. %The ``$\boldsymbol{T}_3$-rule" rejects $2$ less SNPs than the ``$\boldsymbol{T}_2$-rule" but it can be coincidental for this data set. 
Manhattan plots of all the methods are given in Figure~\ref{fig:fig5}. In all the subplots in Figure~\ref{fig:fig5}, we indicate by a red rectangle a specific region of the Genome where $T_{i,N}$ statistics with $N>0$ finds more associations than ``BH", ``ABH" and ``$\boldsymbol{T}_0$-rule". Note that in this specific region there are two associations ($1292$-th and $1405$-th SNP) 
which are found by all the methods. However between these two SNPs, there are two other close SNPs ($1350$-th and $1363$-th SNP) which are only identified by  $T_{i,N}$ statistics with $N>0$. Interestingly, we observe from the correlation matrix of the $z$-scores that these two newly identified SNPs are not highly correlated.
\vspace{5mm}
\hspace{0mm}
\begin{table}
\begin{tabular}{|l|l|c|c|c|c|c|c|}
\hline
\multicolumn{8}{|c|}{Parameters are $b = 0.0918$, $\tau^2 = 2.477$, $\pi = 0.2$ and $\Sigma = S^{-1/2}(X^TX)^{-1}S^{-1/2}$} \\

\hline
\multicolumn{1}{|c|}{Method} & \multicolumn{1}{|c|}{Error} & \multicolumn{1}{c|}{$\boldsymbol{T}_{0}$-rule} & \multicolumn{1}{c|}{$\boldsymbol{T}_{1}$-rule} & \multicolumn{1}{c|}{$\boldsymbol{T}_{2}$-rule} & 
\multicolumn{1}{c|}{Sun$\&$Cai} & \multicolumn{1}{c|}{BH} & \multicolumn{1}{c|}{ABH}\\ 
\cline{3-8}
%\multicolumn{1}{|c|}{Measure} & Estimate & s.e. & Estimate & s.e. & Estimate & s.e.\\
\hline
%\parbox[t]{2mm}{\multirow{3}{*}{\rotatebox[origin=c]{90}{Data Driven}}}
Data & mFDR & 0.043 & 0.0449 &  0.0445 & 0.047 & 0.0408 & 0.045\\
Driven & FDR & 0.0419 & 0.044 &  0.0439 & 0.0468 & 0.0399 & 0.0438\\
 & TP & 52 & 59 & \bf{65} &  55 & 51 & 54\\
\hline
%\parbox[t]{2mm}{\multirow{3}{*}{\rotatebox[origin=c]{90}{Oracle}}}
 & mFDR & 0.049 & {0.0501}  & 0.0488 & 0.0482 & 0.0408 & {0.0516}\\
Oracle & FDR & 0.0486 & 0.0499 & 0.0487 & 0.0467 & 0.0395 & 0.0497\\
 & TP & 57 & 65 & \bf{70} & 56 & 52 & 58\\
\hline
\end{tabular}
\caption{Power Gain by the simulation results using parameters estimated from real data \\
Boldsymbols indicates the best power among the competitors.}
\label{table:table3}
\end{table}

\vspace{5mm}
\hspace{5mm}
\begin{table}
\begin{tabular}{|l|c|c|c|c|c|c|c|}
\hline
\multicolumn{8}{|c|}{Number of Common Identified SNPs }\\%at level $\alpha=0.05$ out of 3.5K snps} \\
\hline
\multicolumn{1}{|c|}{Method} & \multicolumn{1}{c|}{Sun$\&$Cai} & \multicolumn{1}{c|}{BH} & \multicolumn{1}{c|}{ABH} &\multicolumn{1}{c|}{$\boldsymbol{T}_{0}$-rule} & \multicolumn{1}{c|}{$\boldsymbol{T}_{1}$-rule} & \multicolumn{1}{c|}{$\boldsymbol{T}_{2}$-rule} & \multicolumn{1}{c|}{$\boldsymbol{T}_{3}$-rule} \\
\cline{2-8}
%\multicolumn{1}{|c|}{Measure} & Estimate & s.e. & Estimate & s.e. & Estimate & s.e.\\
\hline

S$\&$C & \bf{36} & 20 & 34 & 36 & 33 & 35 & 32 \\

\hline

BH & & \bf{20} & 20 & 20 & 19 & 20 & 20  \\

\hline

ABH & & & \bf{36} & 36 & 33 & 35 & 33 \\

\hline

$\boldsymbol{T}_{0}$-rule &&&& \bf{49} & 43 & 45 & 40\\

\hline

$\boldsymbol{T}_{1}$-rule &&&&& \bf{53} & 52 & 49\\

\hline

$\boldsymbol{T}_{2}$-rule &&&&&&\bf{66}&60\\

\hline

$\boldsymbol{T}_{3}$-rule &&&&&&&\bf{64}\\

\hline
\end{tabular}
\caption{Number of common identified SNPs by each of the method.
}
\label{table:table4}
\end{table}

\begin{figure} 
\begin{subfigure}{.5\textwidth}
  \centering
  % include first image
  \includegraphics[width=9cm, height=6cm]{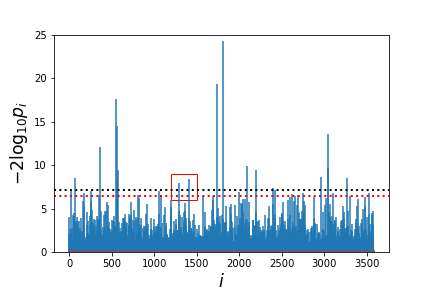}

  \label{fig:sub-first3}
\end{subfigure}
\begin{subfigure}{.5\textwidth}
  \centering
  % include first image
  \includegraphics[width=9cm, height=6cm]{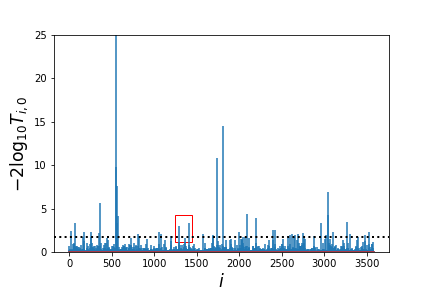}  
  
  \label{fig:sub-second3}
\end{subfigure}
\begin{subfigure}{.5\textwidth}
  \centering
  % include second image
  \includegraphics[width=9cm, height=6cm]{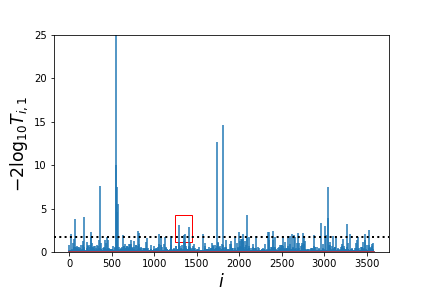}  
  
  \label{fig:sub-third3}
\end{subfigure}
\begin{subfigure}{.5\textwidth}
  \centering
  % include third image
  \includegraphics[width=9cm, height=6cm]{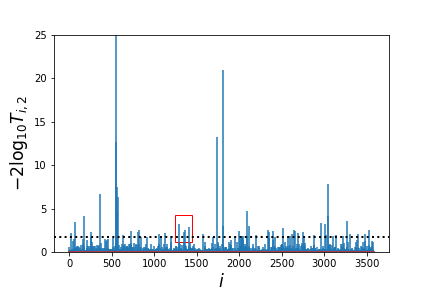}

  \label{fig:sub-fourth3}
\end{subfigure}

\caption{Manhattan plots of the analyzed SNPs\\
In the plot of ``$-2\log_{10}(p_i)$ vs $i$", black dotted line indicates the threshold by ``BH" procedure and red dotted line indicates the threshold by ``ABH" procedure\\
In the plot of ``$-2\log_{10}(T_{i,N})$ vs $i$", black dotted line indicates the threshold $-2\log_{10}(t_{\alpha,N})$
}
\label{fig:fig5}
\end{figure}

%%%%%%%%%%%%%%%%%%%%%%%%%%%%%%%%%%%%%%%%%%%%

\section{Discussion}
\label{sec:disc}
In this paper we have studied practical multiple testing procedures in the two-group model under general dependence. We provide motivation and need for finding an optimal solution in a restricted decision space $\mathcal{C}_N$ and indeed show empirically that we do not lose much by concentrating on the smaller class $\mathcal{C}_N$. We provide practical approaches to estimate the parameters of a two group model under short range dependence structures which encompasses typical real applications. Using this approach we give a practical application in GWAS by finding more associations than  existing practical method using summary statistics. Our developed method works in GWAS without resorting to extensive modelling assumptions of complete bayesian framework as done in \cite{zhu2017bayesian}.

Though concentration on the class $\mathcal{C}_N$ of decision functions makes sense, it is natural to ask if there is better way of defining such a class which yields more practical and powerful solution. Our solution does not use the full correlation matrix -  it only uses the correlations within $N$ main diagonals of the correlation matrix. In other words, the solution ignores all the correlation outside the $N$-band. One possible approach can be to define $i$-th locFDR statistic by conditioning on top $N$ highest correlated $z$-scores and possibly we can choose this $N$ depending on $i$ determined by strength of correlation of the test statistic with the other test statistics. It will be interesting to explore how we can modify our method to accommodate the whole correlation matrix so we do not lose much ``information".

Proposition~\ref{prop:prop1} enables us to see how we can devise a method for mFDR control using the statistics $T_{i,N}$.  
If the entire joint distribution of the $z$-score vector is known, we can always define a policy of the form $\bD^t = (D_1^t(\bZ_{1,N}),\ldots,D_K^t(\bZ_{K,N}))$, $D_i^t(\bZ_{i,N}) = \mathbf{I}(T_{i,N}\le t)$ and determine $$t_{\alpha,N} = \sup\left\{t:Err(\bD^t)\le \alpha\right\}$$
with $Err\in \{FDR,pFDR,FDX\}$. This procedure $\bD^{t_{\alpha,N}}$ will control the chosen error by construction. However  we may not be able to prove similar optimality result like Theorem~\ref{theorem:theorem1} for the policy $\bD^{t_{\alpha,N}}$ with these error rates. To find optimal solutions, we need to solve the generic problem with $Err \in \{FDR,pFDR,FDX\}$,
\begin{equation}
\label{relaxed_rule_err}
            \max_{\bD \in \mathcal{C}_N}{TP(\bD)} \quad \text{s.t} \quad Err(\bD)\le\alpha.
\end{equation} 
The difficulty with finding such a solution is that other error measures such as FDX, FDR, pFDR do not decompose in the same way as mFDR as shown in equation~\eqref{eq:decompositions}. It is a challenge to find similar solution for these error measures.

 It is an interesting direction of future work to develop $T_{i,N}$ based method for $p>>n$ problems with properly chosen $z-$scores which takes into account the joint effect of covariates to some extent at least. 

\cite{sun2009large} developed a method assuming hypothesis states follow the hidden markov model (HMM) structure and the z-scores given the hypothesis states are independent. So their method does not use correlation among the $z$-scores. Interestingly,  proposition~\ref{prop:prop1} holds for any joint distribution of $(\bZ,\bh)$ on $\mathbb{R}^K \times \{0,1\}^K$. Although it is normally assumed that $h_i \overset{i.i.d.}{\sim} Ber(\pi)$, we can also assume that $\bh$ follows a HMM as done in \cite{sun2009large} and we can draw inference based on the statistics $T_{i,N}$ in these situations also. An interesting direction to extend our approach is thus to also allow an HMM structure of hypothesis states in addition to the correlation among $z$-scores.  We expect that it will be possible to derive a practical solution for this setting.

In particular, our estimation method is such that we can easily extend to this HMM based model as well, thereby producing a multiple testing procedure which we can directly apply to real data sets.

Our approach for parameter estimation in the two-group model relies on being able to extract a subset of test statistics which are roughly independent. We believe this is a realistic assumption in most real-life applications, in particular in GWAS. However it remains a challenge to design and test approaches that also apply in cases of long-range strong dependence. Note also that our estimation method do not take into account the known joint dependence among the statistics. Though this estimation approach is robust, taking joint dependence into account can lead to efficient estimate of the model parameters and in turn a boost in power. It may be interesting to explore how we can accommodate the correlations among the test statistics into estimation.

One major observation from our simulation results is that taking even part of the (short range) dependence into account  improves power. In fact, we observe large power gains even when there is very limited and not very strong dependence. Hence  our recommendation is  to use $locFDR_N$ statistics based rules, with $N>0$, whenever there is known dependence, or the dependence within neighborhoods can be well approximated.

\bigskip
\begin{center}
{\large\bf SUPPLEMENTARY MATERIAL}
\end{center}

\begin{description}

\item[Github Link of all relevant codes:] \url{https://github.com/rajeslab/locFDR_N} 

\item[Height data set:] We use Genotype ($X^{n\times p}$) and Height ($\by^{n\times 1}$) data of around 500K individuals provided by uk biobank. Unfortunately we can not make the data available publicly. For reproducibility, readers can use their own data in the notebook available in the above github link.

\end{description}

\bigskip
\begin{center}
{\large\bf Acknowledgments}
\end{center}

This study was supported by Israeli Science Foundation grant 2180/20. UK Biobank research has been
conducted using the UK Biobank Resource under Application Number 56885.

\bibliographystyle{apalike}
\bibliography{main.bib}
\appendix
\section{Proof of Theorem~\ref{theorem:theorem1}}\label{appendix:A}
Note that constraint $mFDR(\bD) \le \alpha$ is equivalent to $\mathbb{E}\left(V(\bD)-\alpha R(\bD)\right) \le 0$. The main essence of the proof relies on the following two decompositions of objective $TP(\bD)$ and the constraint $\mathbb{E}\left(V(\bD)-\alpha R(\bD)\right)$ for $\bD \in \mathcal{C}_N$:
%\begin{equation}
\begin{align} \label{eq:decompositions}
    \mathbb{E}_{\bh,\bZ}\left(V(\bD)-\alpha R(\bD)\right) &=\sum_{i=1}^{K}{\mathbb{E}_{\bh,\bZ}\left[(1-h_i)D_i(\bZ_{i,N})-\alpha D_i(\bZ_{i,N})\right]} \nonumber \\ &=\sum_{i=1}^{K}{\mathbb{E}_{h_i,\bZ_{i,N}}\left[(1-h_i)D_i(\bZ_{i,N})-\alpha D_i(\bZ_{i,N})\right]} \nonumber \\&=\sum_{i=1}^{K}{\mathbb{E}_{\bZ_{i,N}}D_i(\bZ_{i,N})\mathbb{E}_{h_i|\bZ_{i,N}}\left[(1-h_i)-\alpha \right]}   \\&= \mathbb{E}_{\bZ}\left[\sum_{i=1}^{K}{D_i(\bZ_{i,N})(T_{i,N}-\alpha)}\right] \nonumber 
\end{align}
%\end{equation}
The following expression follows exactly the same way as above
\begin{align}
    TP(\bD) &= \mathbb{E}_{\bh,\bZ}\left(\sum_{i=1}^{K}{h_iD_i}\right) = \mathbb{E}_{\bZ}\left(\sum_{i=1}^{K}{D_i(\bZ_{i,N})(1-T_{i,N})}\right)
\end{align}

The lagrangian of the problem is
\begin{align}
    L(\bD,\mu) & = TP(\bD)+\mu\mathbb{E}(\alpha R(\bD)-V(\bD))\\
    &= \mathbb{E}_{\bZ}\left(\sum_{i=1}^{K}{D_i(\bZ_{i,N})(1-T_{i,N}+\alpha\mu-\mu T_{i,N})}\right)
\end{align}
Now the problem in \eqref{relaxed_rule} is equivalent to maximizing $L(\bD,\mu)$ over $\mathcal{C}_N$. Clearly the lagrangian is maximized for the rule of the form $\bD_N^t = (D_{1,N}^t,\dots,D_{K,N}^t)$ where
\begin{align}
    D_{i,N}^t(\bZ_{i,N}) &= \mathbb{I}\left(1-T_{i,N}+\alpha\mu-\mu T_{i,N} >0\right)\\
    &= \mathbb{I}\left(T_{i,N}<t\right)
\end{align}
with $t = \dfrac{1+\alpha\mu}{1+\mu}$. Now the constraint
\begin{equation}
\mathbb{E}\left(V(\bD_N^t)-\alpha R(\bD_N^t)\right) = \mathbb{E}_{\bZ}\left[\sum_{i=1}^{K}{\mathbb{I}\left(T_{i,N}<\dfrac{1+\alpha\mu}{1+\mu}\right)(T_{i,N}-\alpha)}\right] \le 0
\end{equation}
is satisfied if $\dfrac{1+\alpha\mu}{1+\mu} \le \alpha$. Note that $TP(\bD_N^t)$ is increasing in $t = \dfrac{1+\alpha\mu}{1+\mu}$ and hence the optimal solution is the decision $\bD_N^{t_{\alpha,N}}=\bD_N^*$ where
\begin{align*}
    t_{\alpha,N} &= \sup\left\{t: \mathbb{E}\left(V(\bD_N^t)-\alpha R(\bD_N^t)\right) \le 0\right\} \\
    &=\sup \left\{t:\dfrac{\mathbb{E}\sum_{i = 1}^{K}{\mathbb{I}(T_{i,N}\le t)T_{i,N}}}{\mathbb{E}\sum_{i = 1}^{K}{\mathbb{I}(T_{i,N}\le t)}}\le \alpha\right\}
\end{align*}
%The rest of the proof follows exactly the same way as done in \cite{heller2021optimal} in supporting information section S2. Using their same notation, two functions becomes $b_i(\bz)= T_{i,N}(\bz)-\alpha$, $a_i(\bz)= 1-T_{i,N}(\bz)$ and the rest follows.
\section{Derivation of the equations in \eqref{equation:est_fullem}}\label{appendix:B}
Let us assume that we have the model \eqref{eq-mixturemodel}. Our task is to estimate the unknown parameters of the model using EM algorithm. We consider $\bz = (z_1,\dots,z_{\Tilde{K}})$ as our observed data and $\bh = (h_1,\dots, h_{\Tilde{K}})$ as unobserved data. Log likelihood of the complete data $(\bz,\bh)$ is 

\begin{align}
    \log L(\bz,\bh,\bt) & = \log \left[p(\bz|\bh,\bt)p(\bh|\bt)\right]\\
    &= \log \left[\prod_{i}{\dfrac{1}{\sqrt{2\pi(1+\tau^2h_i)}}\exp{-\left[\dfrac{1}{2(1+\tau^2h_i)}(z_i - bh_i)^2\right]}\pi^{h_i}(1-\pi)^{(1-h_i)}}\right] \\
    & = c - \dfrac{1}{2}\sum_{i}{\log (1+\tau^2h_i)}-\dfrac{1}{2}\sum_{i}{\dfrac{1}{(1+\tau^2h_i)}(z_i - bh_i)^2}\\&+\log \pi\sum_{i}{h_i}+\log(1-\pi)\sum_{i}{(1-h_i)}
\end{align}

\textbf{E Step:} Let us assume that the estimate of the parameters in the $t$-th step is $\bt^{(t)}=(\pi^{(t)},b^{(t)},\tau^{2(t)})$. Then we need to find the estimate at the $t+1$-th step by maximizing the conditional expectation $\mathbb{E}\left[\log L(\bz,\bh,\bt)\right\vert\bz,\bt^{(t)}]$.

\textbf{M-Step:}
Clearly from the above expression of $\log L(\bz,\bh,\bt)$
\begin{gather*}
    \dfrac{\delta}{\delta\pi}\mathbb{E}\left[\log L(\bz,\bh,\bt)\right\vert\bz,\bt^{(t)}] = \dfrac{1}{\pi}\sum_{i}{\mathbb{E}(h_i\vert z_i,\bt^{(t)})}-\dfrac{1}{1-\pi}\sum_{i}{(1-\mathbb{E}(h_i\vert z_i,\bt^{(t)}))}=0\\
    \implies \pi^{(t+1)} = \dfrac{1}{\Tilde{K}}\sum_{i}{\mathbb{E}(h_i\vert z_i,\bt^{(t)})} = \dfrac{1}{\Tilde{K}}\sum_{i}{\mathbb{P}(h_i = 1 \vert z_i,\boldsymbol{\theta}^{(t)})} 
\end{gather*}
Similarly setting $\dfrac{\delta}{\delta b}\mathbb{E}\left[\log L(\bz,\bh,\bt)\right\vert\bz,\bt^{(t)}] = 0$, we get
\begin{gather*}
    \dfrac{\delta}{\delta b}\left[\sum_{i}{(z_i-b)^2\mathbb{P}(h_i = 1 \vert z_i,\boldsymbol{\theta}^{(t)})}\right]=0\\
    \implies b^{(t+1)} = \dfrac{\sum_{i}\mathbb{P}(h_i = 1 \vert z_i,\boldsymbol{\theta}^{(t)})z_i}{\sum_{i}\mathbb{P}(h_i = 1 \vert z_i,\boldsymbol{\theta}^{(t)})}
\end{gather*}
Now for determining $\tau^{2(t+1)}$ we set $\dfrac{\delta}{\delta \tau^2}\mathbb{E}\left[\log L(\bz,\bh,\pi,b^{(t+1)},\tau)\right\vert\bz,\bt^{(t)}] = 0$ to get
\begin{gather*}
    \sum_{i}{\dfrac{1}{(1+\tau^2)^2}(z_i - b^{(t+1)})^2 \mathbb{P}(h_i = 1 \vert z_i,\boldsymbol{\theta}^{(t)})} = \sum_{i}{\dfrac{1}{(1+\tau^2)} \mathbb{P}(h_i = 1 \vert z_i,\boldsymbol{\theta}^{(t)})}\\
    \implies {\tau^2}^{(t+1)} =\max\left\{0, \dfrac{\sum_{i}\mathbb{P}(h_i = 1 \vert z_i,\boldsymbol{\theta}^{(t)})(z_i-b^{(t+1)})^2}{\sum_{i}\mathbb{P}(h_i = 1 \vert z_i,\boldsymbol{\theta}^{(t)})}-1\right\}
\end{gather*}
Hence the updates of equations~\eqref{equation:est_fullem} follows. Updates of the equations~\eqref{equation:estimate_partialem} follows similarly.
%\end{appendices}
\end{document}